\def\textbf#1{{\bf #1}}
\def\be{\begin{equation}}
\def\ee{\end{equation}}
\def\ben{\begin{eqnarray}}
\def\een{\end{eqnarray}}
\def\eea{\end{array}}
\def\bea{\begin{array}}
\newcommand{\bei}{\begin{itemize}}
\newcommand{\eei}{\end{itemize}}
\newcommand{\ii}[0]{\mathrm{i}}
\newtheorem{thm}{Theorem}
\newtheorem{cor}{Corollary}
\theoremstyle{definition}
\begin{document}

\title{Convergence of expansions in Schr\"odinger and Dirac eigenfunctions, with an application to the $R$-matrix theory}

\author{Julia Stasi\'{n}ska}
\email{julsta@ifae.es}

\affiliation{Atomic Physics Division, Department of Atomic Physics and Luminescence,\\
Faculty of Applied Physics and Mathematics, Gda\'nsk University of Technology,\\
Narutowicza 11/12, PL 80--233 Gda\'{n}sk, Poland} \affiliation{Grup de
F\'{i}sica Te\`{o}rica: Informaci\'{o} i Fen\'{o}mens Qu\`{a}ntics, Departament
de F\'{i}sica, Universitat Aut\`{o}noma de Barcelona, 08193 Bellaterra
(Barcelona), Spain}

\date{03.08.2011}

\begin{abstract}
Expansion of a wave function in a basis of eigenfunctions of a differential
eigenvalue problem lies at the heart of the $R$-matrix methods for both the
Schr\"{o}dinger and Dirac particles. A central issue that should be carefully
analyzed when functional series are applied is their convergence. In the
present paper, we study the properties of the eigenfunction expansions
appearing in nonrelativistic and relativistic $R$-matrix theories. In
particular, we confirm the findings of Rosenthal [J. Phys. G {\bf 13}, 491
(1987)] and Szmytkowski and Hinze [J. Phys. B {\bf 29}, 761 (1996); J. Phys. A
{\bf 29}, 6125 (1996)] that in the most popular formulation of the $R$-matrix
theory for Dirac particles, the functional series fails to converge to a limit
claimed by other authors.
\end{abstract}

\keywords{eigenfunction expansions, convergence, scattering theory, $R$-matrix theory, Dirac equation}
\pacs{
02.30.Lt, 
03.65.Nk, 
02.30.Mv
}

\pagestyle{plain}

\maketitle
\section{Introduction}\label{intro}
Convergence of expansions of a one-component function into a series of
eigenfunctions of a Sturm--Liouville problem was a subject of many studies. In
some physical situations \cite{WignerEisenbud,SturmLiouville_applications}, of
particular interest are expansions of a function defined only on a finite and
closed interval. The classical results on convergence of such series can be
found, for instance, in
\onlinecite{SturmLiouville_applications,Titchmarsh,LevitanSargsjan,CoddingtonLevinson,Atkinson}
and for modern studies on this type of problems, including the equiconvergence
method, the reader is referred to \onlinecite{Minkin, Mityagin_Schrodinger} and
references therein.

A similar problem for a two-component function was addressed by a number of
mathematicians at the beginning of the 20th century
\cite{Hurwitz,Camp,Schur,BirkhoffLanger,Bliss,Titchmarsh_art} and later
reviewed in numerous textbooks (see e.g.
\onlinecite{LevitanSargsjan,CoddingtonLevinson,Atkinson}). However, very few
articles and textbooks deal with the development of an arbitrary function on a
closed interval. Usually, either the expansion on an open interval is studied
only \cite{Camp,Titchmarsh_art} or some additional conditions are imposed on
the expanded function at the boundary points\cite{Hurwitz, Camp, Schur, Bliss}.
To the best of the author's knowledge, the only classical paper discussing the
general situation is the one by Birkhoff and Langer \cite{BirkhoffLanger}. A
more recent analysis of this kind of problems can be found, e.g., in
\onlinecite{Mityagin_Dirac}. A generalization of the equiconvergence method
\cite{Minkin} to a vector case should be also possible.

In the present paper, we apply the general results concerning convergence of
eigenfunction expansions in the context of the $R$-matrix theory of scattering
processes.
This theory was first developed for low-energy collisions that could be
described with the Schr\"{o}dinger equation \cite{WignerEisenbud} (see
\onlinecite{ThomasLane,Barrett,Szmytkowski_rev} for reviews on the subject).
The $R$-matrix theory for the Dirac equation \cite{Goertzel} was formulated
soon after the nonrelativistic one, with nuclear applications in view. Only
later it was realized that the electron--atom collisions involving targets with
large atomic numbers require a Dirac description, due to the increasing role of
relativistic effects. The $R$-matrix theory was reinvestigated in this context
in \onlinecite{Chang}.

The central idea in the formulation of the $R$-matrix methods for scattering
from spherically symmetric potentials is to divide the whole space into two
regions, a finite reaction volume $r< \varrho$ and the outer region
$r\geq\varrho$, and to expand a wave function in the inner region in a series
of eigenfunctions of the Hamiltonian governing the scattering process
augmented, however, by \emph{artificial boundary conditions} at the sphere
$r=\varrho$.
This procedure allows one to express the $R$-matrix as a limit $r\to
\varrho^{-}$ of an infinite functional series. In both nonrelativistic and
relativistic theories the critical issue, the convergence of the series on the
boundary, was not properly analyzed by the originators and only presumed to
hold. The convergence question was first recognized by Rosenthal
\cite{Rosenthal}, however his conclusions were incorrect. Later Szmytkowski and
Hinze \cite{SzmytkowskiHinze1995,SzmytkowskiHinze1996,Szmytkowski1998} realized
that while the development of the solution in the Schr\"odinger formulation
converges in the whole interval to an \emph{expanded} function, the analogous
series appearing in the relativistic case has a discontinuity at the crucial
boundary point. In this way, in the most popular formulation of the method, the
solution depends on the \emph{artificial} boundary condition imposed on the
basis functions. Taking this into account, Szmytkowski and Hinze developed the
correct Dirac $R$-matrix theory. Their conclusion caused much controversy
\cite{Grant_last} and was not widely recognized by the community
\cite{Grant_book, Burke_book}. We present here a theorem confirming their
results\cite{SzmytkowskiHinze1995,SzmytkowskiHinze1996,Szmytkowski1998} as well
as the general result on convergence obtained by Szmytkowski \cite{JMathPhys}.

The paper is organized as follows. In Section \ref{expansions}, we recall basic
facts from both nonrelativistic and relativistic $R$-matrix theories to
highlight the problem of convergence appearing in both of them. In Section
\ref{convergence}, we give the general convergence theorems concerning the
eigenfunction expansions \cite{Titchmarsh,BirkhoffLanger}. The main result of
the paper, a solution to the Dirac $R$-matrix puzzle based on the theorem by
Birkhoff and Langer \cite{BirkhoffLanger}, can be found in Section
\ref{convergence}. We finish the paper with conclusions and point out some open
problems.


\section{Expansions appearing in the $R$-matrix theories}\label{expansions}
The nonrelativistic and relativistic theories share many similarities, however
in one essential point they are very different, i.e. the eigenfunction
expansion of the solution of a nonrelativistic wave equation converges to a
continuous function, whereas an analogous series in the relativistic theory has
a discontinuity at the crucial boundary point
\cite{SzmytkowskiHinze1995,SzmytkowskiHinze1996,JMathPhys}. As a result, the
relativistic $R$-matrix is not appropriately expressed by a functional series.
To highlight this difference, we shortly introduce both methods in a
single-channel scattering from spherically symmetric potentials. The notation
used in the following sections is based on the monograph on the $R$-matrix
methods in scattering \cite{Szmytkowski_rev}.
%
\subsection{Nonrelativistic $R$-matrix theory}
A nonrelativistic elastic scattering process of spinless particles with mass $m$ and energy $E>0$ from a spherically symmetric potential $V(r)$ is governed by the stationary Schr\"{o}dinger equation.
We assume that the potential $V(r)$ affects the particle only in a finite
spherical volume of radius $\varrho$ centered at $r=0$, denoted further by
$\mathcal{V}_{\varrho}$. Outside this (inner) region the particle is free and
its wave function satisfies the free-Hamiltonian stationary Schr\"{o}dinger
equation. Obviously, the solution in the inner region must pass smoothly into
the solution in the outer region. We denote by $\Psi(E,\mathbf{r})$ the wave
function being the solution of the respective Schr\"{o}dinger equations in the
inner and outer regions.

Since the potential is spherically symmetric, it is enough to consider only the radial part of the function $\Psi(E,\mathbf{r})$ corresponding to the multiindex $\gamma=(l,m_l)$. For a general function $f(\mathbf{r})$, it is defined as
\begin{equation}\label{radial_schr}
F_{\gamma}(r)=\int_{4\pi}\mathrm{d}\mathbf{\hat{r}} r^2 \Upsilon_{\gamma}(\mathbf{r}) f(\mathbf{r}),\quad \mathbf{\hat{r}}=\frac{\mathbf{r}}{r},
\end{equation}
where $\Upsilon_\gamma(\mathbf{r})=(1/r) \mathrm{i}^{l}
Y_{\gamma}(\mathbf{\hat{r}})$, and $Y_{\gamma}$ are normalized spherical
harmonics defined as in \onlinecite{CondonShortley}. We denote by
$\boldsymbol{P}(E,r)$ a vector of radial functions of $\Psi(E,\mathbf{r})$ with elements $P_{\gamma}(E,r)$, and by
$\boldsymbol{D}(E,r)$ a vector of radial functions of $\mathbf{\hat{r}}\cdot\boldsymbol{\nabla}\Psi(E,\mathbf{r})$ with elements $D_{\gamma}(E,r)$.
Let us assume that there exists a matrix $\mathsf{R}_{\mathsf{b}}(E,\varrho)$
connecting $\boldsymbol{P}(E,r)$ and $\boldsymbol{D}(E,r)$ on the boundary of
$\mathcal{V}_{\varrho}$ (which on the radial grid corresponds to $r=\varrho$)
in the following way:
\begin{equation}\label{rmatrix1}
\boldsymbol{P}(E,\varrho)=\mathsf{R}_{\mathsf{b}}(E,\varrho)[\boldsymbol{D}(E,\varrho)-\mathsf{b}\boldsymbol{P}(E,\varrho)],
\end{equation}
where $\mathsf{b}$ is an arbitrary square matrix. The above relation defines
the $R$-matrix $\mathsf{R}_{\mathsf{b}}(E,\varrho)$. In what follows, it will
be assumed that $\mathsf{b}$ is a diagonal, energy-independent, real matrix.
Then the $R$-matrix is also diagonal, and its elements will be denoted by
$(\mathsf{R}_{\mathsf{b}})_{\gamma\gamma}\equiv\mathsf{R}_{\mathsf{b}\gamma}$.
Finding the $R$-matrix is equivalent to solving the scattering problem since
$\mathsf{R}_{\mathsf{b}}(E,\varrho)$ is simply connected to the scattering
matrix \cite{Szmytkowski_rev}.

To determine the eigenfunction expansion of the $R$-matrix, we consider the radial part of the Schr\"{o}dinger equation in the inner region $\mathcal{V}_{\varrho}$:
\begin{eqnarray}\label{schr_rad}
\left(-\frac{\hslash^2}{2m}\frac{\mathrm{d}^2}{\mathrm{d}r^2}+\frac{\hslash^2 l(l+1)}{2mr^2}+V(r)-E\right)P_{\gamma}(E,r)=0,\quad r\in[0,\varrho).&&
\end{eqnarray}
We emphasize that here we do \emph{not} make any restrictions on
the function, except that it vanishes at $r=0$ as $r^{l+1}$. Our
aim is to expand the unknown radial function $P_{\gamma}(E,r)$ in
the basis $\{P_i^{(\gamma)}(r)\}$ generated by the same
Hamiltonian, but augmented by the \emph{artificial} boundary
condition at $r=\varrho$, that is
\begin{eqnarray}
\left(-\frac{\hslash^2}{2m}\frac{\mathrm{d}^2}{\mathrm{d}r^2}+\frac{\hslash^2l(l+1)}{2mr^2}+V(r)-E_i\right)P_i^{(\gamma)}(r)=0, \quad r\in[0,\varrho],&&\label{schr_eig}\\
\lim_{r\to 0} r^{-l-1} P_i^{(\gamma)}(r)=\mathrm{const}, \qquad
\frac{\mathrm{d}}{\mathrm{d}r}P_i^{(\gamma)}(r)\Big|_{r=\varrho}=\left(\mathsf{b}_{\gamma\gamma}+\frac{1}{\varrho}\right)P_i^{(\gamma)}(\varrho).\label{schr_eig_bound}
\end{eqnarray}
The set of eigenvalues $E_i$ is countably infinite, and eigenfunctions
corresponding to different eigenvalues are orthogonal under the standard scalar
product in $\mathcal{L}^2([0,\varrho])$. The formal expansion of an arbitrary
function on $[0,\varrho]$ is
\begin{eqnarray}\label{exp_schr}
P_{\gamma}(E,r)=\sum_{i=0}^{\infty} C_i(E) P_i^{(\gamma)}(r),\qquad C_i(E)=\int_{0}^{\rho}\mathrm{d}rP_i^{(\gamma)}(r)P_{\gamma}(E,r),&&
\end{eqnarray}
where we assume that the functions $\{P_i^{(\gamma)}\}$ are normalized to
unity. The choice of the boundary conditions (\ref{schr_eig_bound}) for the
basis functions allows us to write the coefficient $C_i(E)$ in a form that
reveals the proportionality to
$D_{\gamma}(E,\varrho)-\mathsf{b}_{\gamma\gamma}P_{\gamma}(E,\varrho)$ [compare
to equation (\ref{rmatrix1})]. In the case that we consider it holds that
$D_{\gamma}(E,r)=r\partial_r (1/r) P_{\gamma}(E,r)$. Using equations
(\ref{schr_rad}) and (\ref{schr_eig}) and the boundary conditions fulfilled by
$P_i^{(\gamma)}(r)$, we obtain
\begin{equation}
P_{\gamma}(E,r)=\frac{\hslash^2}{2m}[D_{\gamma}(E,\varrho)-\mathsf{b}_{\gamma\gamma}P_{\gamma}(E,\varrho)]\sum_{i=0}^{\infty}
\frac{P_i^{(\gamma)}(\varrho)}{E_i-E}P_i^{(\gamma)}(r),\quad r\in[0,\varrho).
\end{equation}
Taking the limit $r\to\varrho^{-}$ on both sides leads to
\begin{equation}\label{rmatrix2}
P_{\gamma}(E,\varrho)=\mathsf{R}_{\mathsf{b}\gamma}(E,\varrho)[D_{\gamma}(E,\varrho)-\mathsf{b}_{\gamma\gamma}P_{\gamma}(E,\varrho)],
\end{equation}
where
\begin{equation}\label{rmatrix_el}
\mathsf{R}_{\mathsf{b}\gamma}(E,\varrho)=\frac{\hslash^2}{2m} \lim_{r\to\varrho^{-}}\sum_{i=0}^{\infty} \frac{P_i^{(\gamma)}(\varrho)P_i^{(\gamma)}(r)}{E_i-E}.
\end{equation}
Comparing equations (\ref{rmatrix2}) and (\ref{rmatrix1}) we see that equation
(\ref{rmatrix_el}) defines a diagonal element of the $R$-matrix
$\mathsf{R}_{\mathsf{b}}$. One notices that the $R$-matrix is expressed as a
continuous extension of a functional series to the point $r=\varrho$. The main
question that we are going to answer in section 3 is: {\it Can one interchange
the symbols of limit and sum, and still obtain the same result?} In other
words, does the series on the right-hand side of equation (\ref{rmatrix_el})
converge to a continuous function in $[0,\varrho]$?
%
\subsection{Relativistic $R$-matrix theory}
The relativistic description of an elastic scattering process for particles of
spin $\frac{1}{2}$ with rest mass $m$ and total energy $E$ ($|E|>mc^2$) is
governed by the stationary Dirac equation.
Similarly as we have done in the nonrelativistic case, we assume that the
potential $V(r)$ vanishes outside the spherical volume $\mathcal{V}_{\varrho}$
bounded by a spherical shell, corresponding on the radial grid to $r=\varrho$,
and that the wave functions $\boldsymbol{\Psi}(E,\mathbf{r})$ in the inner and
outer regions pass smoothly one into the other on the boundary of
$\mathcal{V}_\varrho$.
To define the relativistic $R$-matrix, we fix the following notation:
\begin{equation}
\Omega_{\gamma}^{(+)}(\mathbf{r})=\frac{1}{r}
\left(\!\!
\begin{array}{c}
  \mathrm{i}^l \Omega_{\kappa m_j}(\mathbf{\hat{r}}) \\
  0
\end{array}\!\!
\right),\quad
\Omega_{\gamma}^{(-)}(\mathbf{r})=\frac{1}{r}
\left(\!\!
\begin{array}{c}
  0 \\
  \mathrm{i}^{l+1} \Omega_{-\kappa m_j}(\mathbf{\hat{r}})
\end{array}\!\!
\right),
\end{equation}
where the multiindex $\gamma$ is defined as $(\kappa,m_j)$, with $\kappa\in
\mathbbm{Z}\setminus \{0\}$ and
$m_j=\{-|\kappa|+1/2,-|\kappa|+3/2,\ldots,|\kappa|-1/2\}$,
$l=|\kappa+1/2|-1/2$, and $\Omega_{\pm\kappa m_j}(\hat{\mathbf{r}})$ are the
spherical spinors \cite{Szmytkowski_spinors}. We define two radial functions of
a four-component vector $\mathbf{f}(\mathbf{r})$, denoted by the superscripts
$\pm$. For a fixed multiindex $\gamma$ they are given by
\begin{equation}
F_{\gamma}^{(\pm)}(r)=\int_{4\pi}\mathrm{d}\mathbf{\hat{r}} r^2 \Omega_{\gamma}^{(\pm)\dagger}(\mathbf{r}) \mathbf{f}(\mathbf{r}).
\end{equation}
We denote by
$\boldsymbol{P}(E,r)$ and $\boldsymbol{Q}(E,r)$ vectors of "$+$" and "$-$"
radial functions of $\boldsymbol{\Psi}(E,\mathbf{r})$, respectively, with
elements $P_{\gamma}(E,r)$ and $Q_{\gamma}(E,r)$.
Let us define the $R$-matrix $\mathsf{R}_{\mathsf{b}}^{(+)}(E,\varrho)$
connecting $\boldsymbol{P}(E,r)$ and $\boldsymbol{Q}(E,r)$ on the surface of
$\mathcal{V}_{\varrho}$ in the following way:
\begin{eqnarray}
\boldsymbol{P}(E,\varrho)=\mathsf{R}_{\mathsf{b}}^{(+)}(E,\varrho)\left[\left(\frac{2mc}{\hslash}\right)\boldsymbol{Q}(E,\varrho)-\mathsf{b}\boldsymbol{P}(E,\varrho)\right],&&\label{rmatrix_dir1}
\end{eqnarray}
where $\mathsf{b}$ is some square matrix. Henceforward we will assume $\mathsf{b}$ to be diagonal, energy-independent and real. In this case the $R$-matrix will be diagonal as well.

To find the $R$-matrix, we consider the radial part of the Dirac equation in the internal region $\mathcal{V}_{\varrho}$:
\begin{eqnarray}\label{dir_rad}
\left(
  \begin{array}{cc}
    mc^2+V(r)-E & c\hslash (-\mathrm{d}/\mathrm{d}r+\kappa/r) \\
    c\hslash(\mathrm{d}/\mathrm{d}r+\kappa/r) & -mc^2+V(r)-E
  \end{array}\right)
\left(
  \begin{array}{c}
    P_{\gamma}(E,r) \\
    Q_{\gamma}(E,r)
  \end{array}
\right)
=0,\quad r\in[0,\varrho).
\end{eqnarray}
Since the solution must fulfill some unknown boundary condition,
given by (\ref{rmatrix_dir1}), at the point $r=\varrho$, we do not
make any assumptions on the functions $P_{\gamma}(E,r)$ and
$Q_{\gamma}(E,r)$, except for that they vanish for $r=0$. Note
that equation (\ref{dir_rad}) is a homogeneous differential
equation in which $E$ is a parameter, and not an eigenvalue
problem. We will expand the unknown solution in the basis
generated by the eigenproblem consisting of the Dirac Hamiltonian
of the previous equation and boundary conditions that, though
\emph{unphysical}, will allow us to develop the $R$-matrix into a
functional series. Let us consider the eigenproblem
\begin{eqnarray}
&&\left(\!\!
  \begin{array}{cc}
    mc^2+V(r)-E_i &\!\! c\hslash (-\mathrm{d}/\mathrm{d}r+\kappa/r) \\
    c\hslash(\mathrm{d}/\mathrm{d}r+\kappa/r) &\!\! -mc^2+V(r)-E_i
  \end{array}\!\!\right)
\left(\!\!
  \begin{array}{c}
    P_i^{(\gamma)}(r) \\
    Q_i^{(\gamma)}(r)
  \end{array}\!\!
\right)
=0,\quad r\in[0,\varrho],\label{dir_eig}\\
&&\lim_{r\to 0}r^{-\nu}P_i^{(\gamma)}(r)=\mathrm{const}, \qquad  Q_i^{(\gamma)}(\varrho)=(2mc/\hslash)^{-1}\mathsf{b}_{\gamma\gamma}P_i^{(\gamma)}(\varrho),\label{dir_eig_bc}
\end{eqnarray}
where $\nu=l+1$ if $V(0)=\mathrm{const}$, and $\nu=\sqrt{\kappa^2-(\alpha
Z)^2}$ for the Coulomb potential ($\alpha$ is the fine-structure constant and
$Z$ the atomic number).
The set of real eigenvalues $\{E_i\}$ is infinitely countable. Moreover, the
eigenfunctions corresponding to different eigenvalues are orthogonal in the
sense
\begin{equation}
\int_{0}^{\varrho}\mathrm{d}r \big(P_i^{(\gamma)}(r), Q_i^{(\gamma)}(r) \big)
\left(\begin{array}{c}
P_j^{(\gamma)}(r)\\
Q_j^{(\gamma)}(r)
\end{array}\right)=\mathcal{N}_i^2\delta_{ij}.
\end{equation}
We will further assume the eigenfunctions to be normalized to unity; then the
formal expansion of the functions $P_{\gamma}(E,r)$ and $Q_{\gamma}(E,r)$ is
given by
\begin{eqnarray}\label{exp_dir}
&&\left(
  \begin{array}{c}
    P_{\gamma}(E,r) \\
    Q_{\gamma}(E,r)
  \end{array}
\right)
=\sum_{i=-\infty}^{\infty} C_i(E)
\left(
  \begin{array}{c}
    P_i^{(\gamma)}(r) \\
    Q_i^{(\gamma)}(r)
  \end{array}
\right),\quad r\in[0,\varrho),\\[2ex]
&&C_i(E)=\int_{0}^{\varrho}\mathrm{d}r \Big(P_i^{(\gamma)}(r),Q_i^{(\gamma)}(r)\Big)
\left(
  \begin{array}{c}
    P_{\gamma}(E,r) \\
    Q_{\gamma}(E,r)
  \end{array}
\right).
\end{eqnarray}
The coefficients $C_i(E)$ can be written in the form revealing the connection
to the $R$-matrix. Using equations (\ref{dir_rad}) and (\ref{dir_eig}) and the
boundary conditions fulfilled by $P_i^{(\gamma)}(r)$, we obtain
\begin{eqnarray}\label{exp_dir2}
&&\left(\!\!
  \begin{array}{c}
    P_{\gamma}(E,r) \\
    Q_{\gamma}(E,r)
  \end{array}\!\!
\right)
=\frac{\hslash^2}{2m}\left[\frac{2mc}{\hslash}Q_{\gamma}(E,\varrho)-\mathsf{b}_{\gamma\gamma}P_{\gamma}(E,\varrho)\right]
\!\sum_{i=-\infty}^{\infty}\! \frac{P_i^{(\gamma)}(\varrho)}{E_i-E}
\left(\!\!
  \begin{array}{c}
    P_i^{(\gamma)}(r) \\
    Q_i^{(\gamma)}(r)
  \end{array}\!\!
\right),\quad r\in[0,\varrho).\nonumber
\end{eqnarray}
Taking the limit $r\to\varrho^{-}$ on both sides we obtain for the upper component
\begin{equation}\label{rmatrix2_dir}
    P_{\gamma}(E,\varrho)=
    \mathsf{R}_{\mathsf{b}\gamma}^{(+)}(E,\varrho)
    \left[\displaystyle{\frac{2mc}{\hslash}}Q_{\gamma}(E,\varrho)-\mathsf{b}_{\gamma\gamma}P_{\gamma}(E,\varrho) \right],
\end{equation}
where
\begin{eqnarray}
\mathsf{R}_{\mathsf{b}\gamma}^{(+)}(E,\varrho)=\frac{\hslash^2}{2m} \lim_{r\to\varrho^{-}}\sum_{i=-\infty}^{\infty} \frac{P_i^{(\gamma)}(\varrho)P_i^{(\gamma)}(r)}{E_i-E}\,.&&\label{rmatrix_el_dir1}
\end{eqnarray}
Comparing equation (\ref{rmatrix2_dir}) to equation (\ref{rmatrix_dir1}), we
see that  (\ref{rmatrix_el_dir1}) defines the diagonal elements of the
$R$-matrix $\mathsf{R}_{\mathsf{b}}^{(+)}$. Exactly as in the case of the
nonrelativistic $R$-matrix (compare equation (\ref{rmatrix_el})), the
relativistic $R$-matrix is expressed by a functional series whose convergence
is directly related to the convergence properties of the series
(\ref{exp_dir}). In the relativistic case the same question arises: {\it Does
the interchange of the limit and the infinite sum in equation
(\ref{rmatrix_el_dir1}) still give the same result?} In other words, does the
following identity hold:

\begin{eqnarray}\label{rmatrix_quest}
\lim_{r\to\varrho^{-}}\sum_{i=-\infty}^{\infty}
\frac{P_i^{(\gamma)}(\varrho)P_i^{(\gamma)}(r)}{E_i-E}
&\stackrel{\Large{?}}{=}& \sum_{i=-\infty}^{\infty}
\frac{P_i^{(\gamma)}(\varrho)P_i^{(\gamma)}(\varrho)}{E_i-E}\,?
\end{eqnarray}
The expression on the right-hand side of the above equation is traditionally
called the $R$-matrix. However, in the next section we show that in the case of
relativistic scattering it is {\it not} allowed to exchange the two operations.
Therefore the ``$R$-matrix'' as defined on the right-hand side of equation
(\ref{rmatrix_quest}) cannot connect the upper and lower component as in
equation (\ref{rmatrix_dir1}).


\section{Convergence theorems}\label{convergence}
In the previous section, we have reviewed the $R$-matrix theories for the
Schr\"odinger and Dirac particles. In both cases the $R$-matrix has been
defined as a limit of a certain eigenfunction expansion. In the nonrelativistic
theory it is given by equation (\ref{rmatrix_el}), whereas in the relativistic
theory by equation (\ref{rmatrix_el_dir1}). The problem of convergence of these
two functional series is equivalent to convergence of expansions
(\ref{exp_schr}) and (\ref{exp_dir}), respectively. Let us then recall the
convergence theorems for these two problems starting with the one for the
Schr\"odinger problem \cite{Titchmarsh}.
%
%
\begin{thm}\label{SturmLiouvile_th}(adapted from Ref. \onlinecite{Titchmarsh})
Consider the set of solutions of the following Sturm--Liouville problem:
\begin{eqnarray}
&&\left[-\frac{\mathrm{d}^{2}}{\mathrm{d}x^{2}}+q(x)-\lambda_n\right]y_n(x)=0,\qquad a\leq x\leq b,\label{SturmLiouvile_eq}\\
&&\left\{
\begin{array}{l}
y_n(a) \cos\alpha + y_n'(a)\sin\alpha=0\\
y_n(b) \cos\beta  + y_n'(b)\sin\beta=0
\end{array}\right.,\label{SturmLiouvile_bound}
\end{eqnarray}
where $q$ is assumed to be a real continuous function, and $\alpha,\beta \in
[0,\pi)$. Suppose now that $f$ is a real, continuous function of bounded
variation in the interval $[a,b]$. Then the expansion of $f$ in the
eigenfunctions of the eigenproblem
(\ref{SturmLiouvile_eq})+(\ref{SturmLiouvile_bound}) reads:
\begin{equation}
\bar{f}(x)=\sum_{n=0}^{\infty}c_n y_n(x),
\end{equation}
with
\begin{equation}
c_n=\int_{a}^{b}\mathrm{d}x\, y_n(x) f(x),
\end{equation}
where the eigenfunctions $y_n$ are assumed to be normalized to unity in
$\mathcal{L}^2([a,b])$. The series $\bar{f}(x)$ converges uniformly to $f(x)$
in the open interval $(a,b)$ [i.e., on each closed interval contained in
$(a,b)$]. Moreover, for $x=a$ and $x=b$ the following relations are true:
\begin{equation}
\bar{f}(a)=\left\{
\begin{array}{lc}
  f(a), & \mathrm{if}\,\alpha\neq 0\\
  0, & \mathrm{if}\,\alpha=0
\end{array}
\right.,\qquad
\bar{f}(b)=\left\{
\begin{array}{lc}
  f(b), & \mathrm{if}\,\beta\neq 0\\
  0, & \mathrm{if}\,\beta=0.
\end{array}
\right.
\end{equation}

\end{thm}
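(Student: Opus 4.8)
The plan is to reduce the problem to the classical theory of eigenfunction expansions for a regular Sturm--Liouville operator and then to treat the two endpoints separately. First I would recall that the operator $L = -\mathrm{d}^2/\mathrm{d}x^2 + q(x)$ with the separated, self-adjoint boundary conditions \eqref{SturmLiouvile_bound} is a regular Sturm--Liouville problem: $q$ is continuous on the compact interval $[a,b]$, the endpoints are regular (non-singular), and the boundary conditions are of the standard mixed type parametrized by $\alpha,\beta\in[0,2\pi)$. Hence the spectrum is real, discrete, simple, bounded below, and accumulates only at $+\infty$, and the normalized eigenfunctions $\{y_n\}$ form a complete orthonormal system in $\mathcal L^2([a,b])$. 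The partial sums $s_N(x)=\sum_{n=0}^{N}c_n y_n(x)$ therefore converge to $f$ in the $\mathcal L^2$ sense; the content of the theorem is the \emph{pointwise/uniform} upgrade of this, which is where the hypotheses ``continuous and of bounded variation'' enter.

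The main tool I would invoke is the contour-integral (resolvent) representation of the partial sum together with an equiconvergence argument: one writes $s_N(x)$ as a contour integral of the Green's function (resolvent kernel) $G(x,\xi;\lambda)$ of $L-\lambda$ against $f(\xi)$, the contour enclosing the first $N+1$ eigenvalues, and then compares $G$ with the corresponding kernel $G_0$ for the ``free'' comparison problem $q\equiv 0$ (or for a trigonometric Fourier-type problem with the same boundary conditions). The classical estimate is that $G - G_0$ is uniformly small on large circles $|\lambda|=R_N$ chosen between consecutive eigenvalues, so the expansion of $f$ in $\{y_n\}$ converges wherever and however the ``free'' expansion does; this is Titchmarsh's equiconvergence theorem. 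For a continuous function of bounded variation the comparison Fourier-type series converges uniformly on any closed subinterval of $(a,b)$ to $f$ (Dirichlet--Jordan test, noting that at an interior point the one-sided limits coincide by continuity), and hence so does $\sum c_n y_n$. This establishes uniform convergence to $f$ on every $[a+\delta,b-\delta]$, and, by a standard argument controlling the behaviour near the endpoints via the boundary conditions, uniform convergence on all of $(a,b)$.

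It remains to evaluate the limit at $x=a$ and $x=b$, which is the genuinely delicate part and the point most relevant to the paper's application. Here I would argue as follows. If $\sin\alpha\neq 0$ (i.e. $\alpha\neq 0$, so the condition at $a$ is of Robin or Neumann type), then each eigenfunction satisfies $y_n(a)\cos\alpha + y_n'(a)\sin\alpha = 0$ with $y_n(a)$ not forced to vanish; one shows, again from the equiconvergence estimate applied now \emph{at} the endpoint together with the Dirichlet--Jordan behaviour of the comparison series at a regular endpoint, that $\bar f(a)=f(a)$ — the value of the expanded function itself, not anything depending on the boundary data of $f$. If instead $\alpha=0$ (Dirichlet condition at $a$), then $y_n(a)=0$ for every $n$, so trivially every partial sum vanishes at $a$ and $\bar f(a)=0$ regardless of $f(a)$. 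The identical dichotomy at $x=b$ gives the stated formula for $\bar f(b)$. The hard part will be making the endpoint analysis rigorous: one must control the partial sums \emph{uniformly up to and including the endpoint}, which requires precise asymptotics $\lambda_n \sim (n\pi/(b-a))^2$ and $y_n(x)=\sqrt{2/(b-a)}\cos(\sqrt{\lambda_n}(x-a)+\text{phase})+O(1/n)$ uniformly in $x$, and then a careful summation by parts exploiting the bounded variation of $f$ to bound the tail of the series near $x=a,b$. Since all of this is exactly the content of the cited reference \cite{Titchmarsh}, I would present the proof as an application of those results, spelling out only the endpoint bookkeeping that yields the case distinction on $\alpha$ and $\beta$.
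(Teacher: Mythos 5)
The paper offers no proof of this theorem---it is quoted directly from Titchmarsh's monograph---and your outline (resolvent contour integral, equiconvergence with the free/Fourier comparison problem, Dirichlet--Jordan at interior points, and the $\sin\alpha=0$ versus $\sin\alpha\neq 0$ dichotomy at the endpoints) is a faithful summary of exactly the argument in the cited source, so it is correct and takes the same route. One small observation your analysis surfaces: the operative condition at the endpoint is $\sin\alpha=0$ (Dirichlet), not literally $\alpha=0$, so with the paper's range $\alpha\in[0,2\pi)$ the value $\alpha=\pi$ also forces $\bar f(a)=0$; the theorem as stated implicitly assumes the usual normalization $\alpha\in[0,\pi)$.
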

\noindent The above theorem can be applied to the nonrelativistic expansion,
indicating that in equation (\ref{rmatrix_el}) the limit can be exchanged with
the infinite sum, giving
\begin{equation}\label{rmatrix_el_lim}
\mathsf{R}_{\mathsf{b}\gamma}(E,\varrho)=\frac{\hslash^2}{2m} \sum_{i=0}^{\infty} \frac{P_i^{(\gamma)}(\varrho)P_i^{(\gamma)}(\varrho)}{E_i-E}.
\end{equation}
%
%

In what follows, we present a theorem concerning the expansion of a two component function appearing in the Dirac $R$-matrix definition.
\begin{thm}\label{Dirac_th}(adapted from Ref. \onlinecite{BirkhoffLanger})
Consider the following boundary value problem:
\begin{eqnarray}
&&\frac{\mathrm{d}}{\mathrm{d}x}\left(
  \begin{array}{c}
    u(x) \\
    v(x)
  \end{array}
\right)=\Big[\lambda A(x)+B(x)\Big]\left(
  \begin{array}{c}
    u(x) \\
    v(x)
  \end{array}
\right),\qquad a\leq x\leq b,\label{bv1}\\
&&W_a
\left(
    \begin{array}{c}
    u(a) \\
    v(a)
  \end{array}
\right)+
W_b
\left(
    \begin{array}{c}
    u(b) \\
    v(b)
  \end{array}
\right)=0,\label{bv2}
\end{eqnarray}
where $A(x),B(x)$ are $2\times 2$ matrices of functions continuous with their
first derivatives, $A(x)$ being diagonal, and $W_a,W_b$ are constant $2\times
2$ square matrices. Let us assume that: (i) the eigenvalues of $A(x)$, denoted
by $\vartheta_j(x)$ $(j=1,2)$, are continuous functions fulfilling the
following conditions for all $x\in[a,b]$:
\begin{eqnarray}\label{vartheta_cond}
&&\vartheta_j(x)\neq 0\nonumber\\
&&\vartheta_1(x)\neq \vartheta_2(x)\nonumber\\
&&\mathrm{Arg} [\vartheta_1(x)-\vartheta_2(x)]=\mathrm{const}\nonumber\\
&&\mathrm{Arg}\, \vartheta_j(x)=\mathrm{const}
\end{eqnarray}
\begin{figure}
  a)\includegraphics[width=0.4\textwidth]{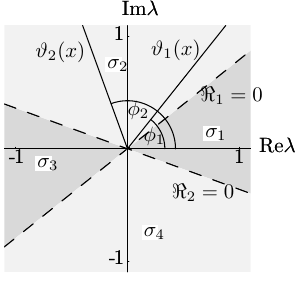}\,b)\includegraphics[width=0.4\textwidth]{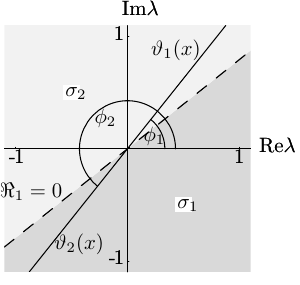}\\
  \caption{Two possible ways of dividing a complex plane of the parameter $\lambda$ into sectors in which the value $\Re_j$ has a fixed
  sign. In figure a) the two eigenvalues fulfill the condition
$\mathrm{Arg}[\vartheta_1(x)]\neq \mathrm{Arg}[\pm\vartheta_2(x)]$ while in b)
the two eigenvalues have phases differing by $\pi$, i.e.,
$\mathrm{Arg}[\vartheta_1(x)]=\mathrm{Arg}[-\vartheta_2(x)]$. The same division
is given by the situation
$\mathrm{Arg}[\vartheta_1(x)]=\mathrm{Arg}[\vartheta_2(x)]$, but it is not
depicted in the figure to avoid repetition. The continuous halflines mark the
rays on which the values of $\vartheta_j(x)$ lie, whereas the dashed lines
--- the rays on which the functions $\Re_j$ are equal to zero. The dashed lines
are the borders of the sectors $\sigma_k$, $k=1,2,\ldots$.}\label{podzial}
\end{figure}
Further, let us divide the complex plane of the parameter
$\lambda$ into sectors in which the sign of the expressions
$\Re_j=\mathrm{Re}(\lambda \int_{a}^{b}\mathrm{d}t\vartheta_j(t))$
$(j=1,2)$ is fixed. If one takes into account the conditions
(\ref{vartheta_cond}), there are two possibilities of dividing the
complex plane of $\lambda$ corresponding to situations when either
$\mathrm{Arg}[\vartheta_1(x)]\neq \mathrm{Arg}[\pm\vartheta_2(x)]$
or $\mathrm{Arg}[\vartheta_1(x)]=\mathrm{Arg}[\pm\vartheta_2(x)]$.
Both divisions are visualized in figure \ref{podzial}, where the
aforementioned sectors are denoted by $\sigma_k$, $k=1,2,\ldots$.
For each sector $\sigma_k$ we define the $2\times 2$ matrices
$\overline{\,\mathrm{I}}^{\,k}$ and
$\underline{\,\mathrm{I}}^{\,k}$ with elements
\begin{eqnarray}\label{deltas}
(\overline{\,\mathrm{I}}^{\,k})_{ij}=\left\{
\begin{array}{cc}
  \delta_{ij}, &  \Re_j\leq 0 \,\;\mathrm{in\; sector}\;\sigma_k \\
  0, &  \Re_j>0 \,\;\mathrm{in\; sector}\;\sigma_k
\end{array}
\right., &&\nonumber\\
(\underline{\,\mathrm{I}}^{\,k})_{ij}=\left\{
\begin{array}{cc}
  0, &  \Re_j\leq 0 \,\;\mathrm{in\; sector}\;\sigma_k \\
  \delta_{ij}, &  \Re_j>0 \,\;\mathrm{in\; sector}\;\sigma_k
\end{array}
\right. .
\end{eqnarray}
(ii) Let the matrices $W_a$ and $W_b$ be such that in each sector $\sigma_k$
the following matrix is invertible:
\begin{equation}\label{bvcond}
\Omega_k \equiv W_a \overline{\,\mathrm{I}}^{\,k}
+
W_b
\underline{\,\mathrm{I}}^{\,k}.
\end{equation}
Then: 1. the set of eigenvalues $\lambda_n$ and respective normalized
eigenfunctions $(u_{n},v_{n})^T$ of the problem (\ref{bv1})+(\ref{bv2}) is
infinitely countable; the orthonormality relation is the following:
\begin{equation}
\int_{a}^{b}\mathrm{d}x (\hat{u}_n(x),\hat{v}_n(x))A(x)
\left(\begin{array}{c}
u_m(x)\\
v_m(x)
\end{array}\right)=\delta_{mn},
\end{equation}
where $\big(\hat{u}_{n}(x),\hat{v}_n(x)\big)$ is the eigenvector of the adjoint
boundary value problem \cite{footnote}, corresponding to the eigenvalue
$\lambda_n$.
2. the development of any two-component function $F=(F_1,F_2)^{T}$, real and
continuous with the first derivative in the interval $[a,b]$, in the
eigenfunctions of the boundary problem (\ref{bv1})+(\ref{bv2}) is given by
\begin{equation}\label{exp_th}
\bar{F}(x)=\sum_{n=-\infty}^{\infty} C_n
\left(\begin{array}{c}
    u_{n}(x) \\
    v_{n}(x)
  \end{array}
\right),
\end{equation}
with
\begin{equation}\label{coeff_th}
C_n=\int_{a}^{b}\mathrm{d}x\, \big(\hat{u}_{n}(x),\hat{v}_n(x)\big) A(x)\left(\begin{array}{c}
    F_1(x) \\
    F_2(x)
  \end{array}
\right).
\end{equation}
The expansion (\ref{exp_th}) has the following properties:
\begin{subequations}
\begin{eqnarray}
\bar{F}(x)&=&F(x),\qquad \mathrm{for}\quad a<x<b,\label{limitx}\\
\bar{F}(a)&=&H_a F(a)+J_a F(b),\label{limita}\\
\bar{F}(b)&=&H_b F(a)+J_b F(b),\label{limitb}
\end{eqnarray}
\end{subequations}
where the $2\times 2$ matrices $H_a, J_a, H_b, J_b$ are fully determined by the matrix $A(x)$ and boundary conditions, and given by the expressions
\begin{subequations}
\begin{eqnarray}\label{matr_coeff}
H_a&=&\frac{1}{2}\mathbbm{1}_2+\sum_k \left[-\frac{\omega_k}{2\pi} \overline{\,\mathrm{I}}^{\,k} \Omega_k^{-1}W_a \underline{\,\mathrm{I}}^{\,k}\right]\label{Ha}\\
J_a&=&\sum_k \left[-\frac{\omega_k}{2\pi} \overline{\,\mathrm{I}}^{\,k} \Omega_k^{-1}W_b \overline{\,\mathrm{I}}^{\,k}\right]\label{Ja}\\
H_b&=&\sum_k \left[-\frac{\omega_k}{2\pi} \underline{\,\mathrm{I}}^{\,k} \Omega_k^{-1}W_a \underline{\,\mathrm{I}}^{\,k}\right]\label{Hb}\\
J_b&=&\frac{1}{2}\mathbbm{1}_2+\sum_k \left[-\frac{\omega_k}{2\pi} \underline{\,\mathrm{I}}^{\,k} \Omega_k^{-1}W_b \overline{\,\mathrm{I}}^{\,k}\right]\label{Jb}.
\end{eqnarray}
\end{subequations}
In the above, the parameter $\omega_k$ is an angle between the boundary rays of
a sector $\sigma_k$ (dashed lines in Fig. \ref{podzial}) and $\Omega_k$ is the
invertible matrix defined in equation (\ref{bvcond}).
\end{thm}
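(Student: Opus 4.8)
\noindent The plan is to realise the expansion (\ref{exp_th}) as the limit of contour integrals of the resolvent of the boundary value problem (\ref{bv1})+(\ref{bv2}), following the classical scheme of Birkhoff and Langer \cite{BirkhoffLanger}. First I would isolate the spectral parameter in (\ref{bv1}) and construct a fundamental matrix $\Phi(x,\lambda)$ of solutions. Since $A(x)$ is diagonal with nonvanishing, distinct eigenvalues $\vartheta_1(x),\vartheta_2(x)$ whose arguments are constant, a single constant similarity transformation keeps the leading term diagonal, and Birkhoff's method of asymptotic integration gives, for $|\lambda|\to\infty$, columns of $\Phi$ that behave like $\exp\!\big(\lambda\int_a^x\vartheta_j(t)\,\mathrm{d}t\big)$ times a factor bounded in $x$ and in $\lambda$. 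The hypotheses (\ref{vartheta_cond}) ensure that the level curves $\mathrm{Re}\big(\lambda\int_a^b\vartheta_j\big)=0$ are straight rays through the origin, so the $\lambda$-plane decomposes into the finitely many sectors $\sigma_k$ of Fig.~\ref{podzial}; inside a fixed $\sigma_k$ each of the two exponentials is monotonically growing or decaying, and the matrices $\overline{\,\mathrm{I}}^{\,k},\underline{\,\mathrm{I}}^{\,k}$ of (\ref{deltas}) are precisely the projectors onto the decaying, respectively growing, components.

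Next I would build the Green's matrix $G(x,\xi,\lambda)$ by the usual recipe: pick the solution satisfying the $W_a$ part of (\ref{bv2}), the solution satisfying the $W_b$ part, and glue them across $x=\xi$ with the jump prescribed by the right-hand side $[\lambda A+B]$. Solvability of this gluing, uniformly in each sector, is exactly the invertibility of $\Omega_k=W_a\overline{\,\mathrm{I}}^{\,k}+W_b\underline{\,\mathrm{I}}^{\,k}$ assumed in (ii), with $\Omega_k^{-1}$ controlling how endpoint data propagate. From the asymptotics of $\Phi$ one reads off that, for $x$ bounded away from $\xi$ and from the endpoints, $\|G(x,\xi,\lambda)\|=O(1/|\lambda|)$ uniformly in $\sigma_k$, whereas near $x=a$ or $x=b$ an $O(1/|\lambda|)$ boundary term survives in which a non-decaying solution is evaluated at an endpoint and is therefore \emph{not} exponentially small --- this is the source of the boundary discrepancy (\ref{limita})--(\ref{limitb}).

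I would then form $\frac{1}{2\pi\mathrm{i}}\oint_{|\lambda|=\rho_n}(R_\lambda F)(x)\,\mathrm{d}\lambda$, where $(R_\lambda F)(x)=\int_a^b G(x,\xi,\lambda)A(\xi)F(\xi)\,\mathrm{d}\xi$ is the resolvent in its natural weighting --- this weighting is what puts the factor $A(\xi)$ and the adjoint eigenvectors into (\ref{coeff_th}), the latter being forced by the residue of a non-self-adjoint resolvent --- and $\rho_n\to\infty$ runs through radii avoiding the spectrum, which is discrete by part~1 and whose asymptotic distribution is furnished by the same asymptotics. By the residue theorem this contour integral equals the partial sum of the series (\ref{exp_th})+(\ref{coeff_th}) over $|\lambda_n|<\rho_n$. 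It remains to pass to the limit arc by arc. On the portion of $|\lambda|=\rho_n$ lying in $\sigma_k$, inserting the asymptotics of $G$: the ``diagonal'' part contributes, for $a<x<b$, a kernel concentrating at $\xi=x$ and reconstructing $F(x)$ --- the vector analogue of the localisation/Riemann--Lebesgue step, where the assumption that $F$ is $C^1$ (hence of bounded variation) secures the decay of the remainder --- while at $x=a$ and $x=b$ the surviving boundary terms, integrated over an arc of opening $\omega_k$, yield a factor $\omega_k/2\pi$ times $\overline{\,\mathrm{I}}^{\,k}\Omega_k^{-1}W_a\underline{\,\mathrm{I}}^{\,k}$, $\overline{\,\mathrm{I}}^{\,k}\Omega_k^{-1}W_b\overline{\,\mathrm{I}}^{\,k}$, and their analogues; summing over $k$ and adding the $\frac{1}{2}\mathbbm{1}_2$ coming from the half-residue at the endpoint produces exactly (\ref{Ha})--(\ref{Jb}).

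The main obstacle is the endpoint analysis: one must track, uniformly as $\rho_n\to\infty$ and across each sector boundary, which exponential factor in $\Phi$ and in $G$ dominates, and then carry out the bookkeeping of the projectors $\overline{\,\mathrm{I}}^{\,k},\underline{\,\mathrm{I}}^{\,k}$, of $\Omega_k^{-1}$, and of the blocks $W_a,W_b$ to confirm that the arc integrals assemble into the stated matrices $H_a,J_a,H_b,J_b$ and not into some other combination. Secondary technical points are the justification of interchanging the limit $\rho_n\to\infty$ with the $\xi$-integration --- handled by the uniform $O(1/|\lambda|)$ bound on $G$ together with the $C^1$ hypothesis on $F$ --- and the estimate that no eigenvalues accumulate, so the circles $\rho_n$ can indeed be kept in the resolvent set.
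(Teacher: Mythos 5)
The paper offers no proof of this theorem: immediately after stating it, it says the result is a special case of the general expansion theory of Birkhoff and Langer and refers the reader to that work for the proof. Your outline --- Birkhoff asymptotic integration of the fundamental matrix, construction of the Green's matrix with sectorwise invertibility of $\Omega_k$ as the solvability condition, contour integration of the resolvent over expanding circles, and arc-by-arc evaluation yielding $F(x)$ in the interior and the $\tfrac{1}{2}\mathbbm{1}_2+\sum_k(-\omega_k/2\pi)(\cdots)$ matrices at the endpoints --- is precisely the classical argument of that cited reference, so it takes essentially the same route as the proof the paper relies on.
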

\noindent The above theorem is the special case of a more general eigenfunction
expansion problem considered in \onlinecite{BirkhoffLanger}. The reader may
find there the proof of the above facts. We will present here in more details
the situation directly applicable to the $R$-matrix expansion. Let us then
state and prove the following corollary.
\begin{cor}\label{Dirac_cor}
Consider the boundary value problem
\begin{eqnarray}
&&\left(\!
  \begin{array}{cc}
    p(x)-\lambda \rho(x) & -\mathrm{d}/\mathrm{d}x+t(x) \\
    \mathrm{d}/\mathrm{d}x+t(x) & q(x)-\lambda \rho(x)
  \end{array}\!\right)
\left(\!
  \begin{array}{c}
    f(x) \\
    g(x)
  \end{array}\!
\right)
=0, \quad a\leq x\leq b,\label{cor_eq}\\
%
&&\left(\!\begin{array}{cc} \cos \alpha  & \sin \alpha  \\
 0 & 0
\end{array}\right)
\left(
    \begin{array}{c}
    f(a) \\
    g(a)
  \end{array}\!
\right)+
\left(\!\begin{array}{cc}
 0 & 0 \\
 \cos \beta  & \sin \beta  \\
\end{array}\!
\right)
\left(\!
    \begin{array}{c}
    f(b) \\
    g(b)
  \end{array}\!
\right)=0,\label{cor_bv}
\end{eqnarray}
with $p,q,t,\rho$ being real functions continuous with first derivatives,
$\rho(x)>0$ for all $x\in[a,b]$ and $\alpha,\beta\in[0,\pi)$. Then the set of
eigenvalues $\lambda_n$ and eigenfunctions $(f_n,g_n)^T$ is discrete. The
expansion of a two-component function $F=(F_1,F_2)^{T}$, continuous with the
first derivative in $[a,b]$, in the set $\{(f_n,g_n)^T\}$ is given by
\begin{eqnarray}\label{cor_exp}
&&\bar{F}(x)=\sum_{n=-\infty}^{\infty} C_n
\left(\!\begin{array}{c}
    f_{n}(x) \\
    g_{n}(x)
  \end{array}\!
\right),\nonumber\\
&&C_n=\int_{a}^{b}\mathrm{d}x\, \rho(x) \big(f_{n}(x),g_n(x)\big)
\left(\!\begin{array}{c}
    F_1(x) \\
    F_2(x)
  \end{array}\!
\right),
\end{eqnarray}
and has the following properties:
\begin{subequations}
\begin{eqnarray}
\bar{F}(x)&=&F(x),\qquad \mathrm{for}\quad a<x<b,\label{cor_lim1}\\
\bar{F}(a)&=& \frac{1}{2}\left(\!
\begin{array}{ll}
 1-\cos 2 \alpha  & -\sin 2 \alpha  \\
 -\sin 2 \alpha  & 1+\cos 2 \alpha
\end{array}\!
\right) F(a),\label{cor_lim2}\\
\bar{F}(b)&=& \frac{1}{2}\left(\!
\begin{array}{ll}
 1-\cos 2 \beta  & -\sin 2 \beta  \\
 -\sin 2 \beta  & 1+\cos 2 \beta
\end{array}\!
\right)F(b).\label{cor_lim3}
\end{eqnarray}
\end{subequations}

\end{cor}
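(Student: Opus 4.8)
The plan is to bring the boundary value problem (\ref{cor_eq})+(\ref{cor_bv}) into the normal form (\ref{bv1})+(\ref{bv2}) required by Theorem \ref{Dirac_th} and then read off its assertions. Writing out the two rows of (\ref{cor_eq}) produces the first-order system
\begin{equation}
\frac{\mathrm{d}}{\mathrm{d}x}\left(\begin{array}{c} f \\ g\end{array}\right)=\left[\lambda\,\rho(x)\left(\begin{array}{cc}0 & 1\\ -1 & 0\end{array}\right)+\left(\begin{array}{cc}-t(x) & -q(x)\\ p(x) & t(x)\end{array}\right)\right]\left(\begin{array}{c}f\\ g\end{array}\right).
\end{equation}
The $\lambda$-coefficient is not diagonal, so I diagonalise it once and for all by the constant matrix $S=\left(\begin{smallmatrix}1 & 1\\ \mathrm{i} & -\mathrm{i}\end{smallmatrix}\right)$, whose columns are eigenvectors of $\left(\begin{smallmatrix}0 & 1\\ -1 & 0\end{smallmatrix}\right)$ with eigenvalues $\pm\mathrm{i}$, and pass to $(u,v)^{T}=S^{-1}(f,g)^{T}$. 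In these variables the system has the form (\ref{bv1}) with $A(x)=\rho(x)\,\mathrm{diag}(\mathrm{i},-\mathrm{i})$, which is diagonal and (since $\rho\in C^{1}$) continuous with its derivative, and $B(x)=S^{-1}\left(\begin{smallmatrix}-t & -q\\ p & t\end{smallmatrix}\right)S\in C^{1}$; the separated conditions (\ref{cor_bv}) turn into (\ref{bv2}) with $W_{a}=\left(\begin{smallmatrix}\cos\alpha & \sin\alpha\\ 0 & 0\end{smallmatrix}\right)S=\left(\begin{smallmatrix}\mathrm{e}^{\mathrm{i}\alpha} & \mathrm{e}^{-\mathrm{i}\alpha}\\ 0 & 0\end{smallmatrix}\right)$ and $W_{b}=\left(\begin{smallmatrix}0 & 0\\ \cos\beta & \sin\beta\end{smallmatrix}\right)S=\left(\begin{smallmatrix}0 & 0\\ \mathrm{e}^{\mathrm{i}\beta} & \mathrm{e}^{-\mathrm{i}\beta}\end{smallmatrix}\right)$.

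Next I verify the hypotheses of Theorem \ref{Dirac_th}. The eigenvalues of $A(x)$ are $\vartheta_{1}(x)=\mathrm{i}\rho(x)$ and $\vartheta_{2}(x)=-\mathrm{i}\rho(x)$; since $\rho>0$ on $[a,b]$ they are nowhere zero, everywhere distinct, have constant arguments $\pm\pi/2$, and $\mathrm{Arg}[\vartheta_{1}(x)-\vartheta_{2}(x)]=\pi/2$ is constant, so (\ref{vartheta_cond}) holds. Because the two arguments differ by $\pi$, this is case b) of Fig.~\ref{podzial}: with $c=\int_{a}^{b}\rho(t)\,\mathrm{d}t>0$ one has $\Re_{1}=-c\,\mathrm{Im}\,\lambda$ and $\Re_{2}=c\,\mathrm{Im}\,\lambda$, so the $\lambda$-plane splits into the half-planes $\{\mathrm{Im}\,\lambda>0\}$ and $\{\mathrm{Im}\,\lambda<0\}$, each of opening $\omega_{k}=\pi$. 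In the upper half-plane $\overline{\,\mathrm{I}}^{\,1}=\mathrm{diag}(1,0)$, $\underline{\,\mathrm{I}}^{\,1}=\mathrm{diag}(0,1)$, so $\Omega_{1}=\mathrm{diag}(\mathrm{e}^{\mathrm{i}\alpha},\mathrm{e}^{-\mathrm{i}\beta})$; in the lower half-plane the labels swap and $\Omega_{2}=\left(\begin{smallmatrix}0 & \mathrm{e}^{-\mathrm{i}\alpha}\\ \mathrm{e}^{\mathrm{i}\beta} & 0\end{smallmatrix}\right)$. Both have determinant of modulus one, so condition (ii) of the theorem is met for \emph{every} $\alpha,\beta\in[0,2\pi)$, and Theorem \ref{Dirac_th} applies (applied, by $\mathbb{R}$-linearity, to the real and imaginary parts of the in general complex-valued function $S^{-1}F$ and recombined).

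It remains to translate the conclusions back through $(f_{n},g_{n})^{T}=S(u_{n},v_{n})^{T}$. Since $S$ is a fixed invertible matrix, discreteness of $\{\lambda_{n}\}$ and the interior identity $\bar F(x)=F(x)$ on $(a,b)$ carry over at once, yielding (\ref{cor_lim1}). The coefficient (\ref{coeff_th}), written with $A=\rho\,\mathrm{diag}(\mathrm{i},-\mathrm{i})$ and the adjoint eigenvector of the transformed problem, collapses after undoing the substitution to the $\rho$-weighted scalar product in (\ref{cor_exp}); the quickest check is that (\ref{cor_eq})+(\ref{cor_bv}) is formally self-adjoint with weight $\rho$, its boundary concomitant being $[f\tilde g-g\tilde f]_{a}^{b}$, which vanishes on (\ref{cor_bv}). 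Finally I evaluate (\ref{matr_coeff}): in $J_{a}$ and $H_{b}$ the single non-zero sector term contains a vanishing row, respectively column, of $W_{b}$ or $W_{a}$, so $J_{a}=H_{b}=0$ and the endpoints decouple; summing the two sector contributions to $H_{a}$ (each weighted by $\omega_{k}/2\pi=\tfrac12$) gives $H_{a}=\tfrac12\left(\begin{smallmatrix}1 & -\mathrm{e}^{-2\mathrm{i}\alpha}\\ -\mathrm{e}^{2\mathrm{i}\alpha} & 1\end{smallmatrix}\right)$, whence $\bar F(a)=S H_{a}S^{-1}F(a)=\tfrac12\left(\begin{smallmatrix}1-\cos2\alpha & -\sin2\alpha\\ -\sin2\alpha & 1+\cos2\alpha\end{smallmatrix}\right)F(a)$, i.e. (\ref{cor_lim2}); (\ref{cor_lim3}) follows identically from $J_{b}$ with $\beta$ in place of $\alpha$. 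As a sanity check, this $2\times2$ matrix is exactly the orthogonal projector onto the line $\cos\alpha\,f+\sin\alpha\,g=0$, so the series at the endpoint returns the part of $F(a)$ consistent with the imposed boundary condition.

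The step I expect to be the real obstacle is the bookkeeping in the last paragraph around the \emph{adjoint} boundary value problem: because $S$ is complex, the transformed system is not self-adjoint in the naive sense, and one must check carefully that the theorem's coefficient formula and orthonormality relation — stated in terms of adjoint eigenvectors and the complex weight $A$ — genuinely reproduce the real, $\rho$-weighted objects of the corollary, with the correct normalisation and signs. Everything else — diagonalising the $\lambda$-coefficient, computing $\Omega_{k}^{-1}$, and multiplying the constant matrices in (\ref{matr_coeff}) — is routine linear algebra.
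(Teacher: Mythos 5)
Your proposal is correct and follows essentially the same route as the paper: reduce (\ref{cor_eq})+(\ref{cor_bv}) to the Birkhoff--Langer normal form (\ref{bv1})+(\ref{bv2}) by a constant similarity that diagonalises the $\lambda$-coefficient, verify hypotheses (i)--(ii) with the two half-plane sectors of Fig.~\ref{podzial}b, evaluate $H_a,J_a,H_b,J_b$ from (\ref{matr_coeff}), and conjugate back to obtain the endpoint projectors. The only (immaterial) differences are your choice of the non-unitary diagonaliser $S$ in place of the paper's unitary $U,\widetilde{U}$, and your appeal to formal self-adjointness of the original problem to justify the $\rho$-weighted coefficients where the paper instead exhibits the adjoint eigenvector $(\hat{u}_n,\hat{v}_n)^{T}=(1/2)\widetilde{U}(u_n,v_n)^{T}$ explicitly.
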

\begin{proof}
To prove the corollary, let us rewrite equation (\ref{cor_eq}) in such form
that the results from \onlinecite{BirkhoffLanger}, recalled in Theorem
\ref{Dirac_th}, apply directly, i.e. we would like to have the differential
equation and boundary conditions in the form (\ref{bv1})+(\ref{bv2}).
To achieve this, we multiply equation (\ref{cor_eq}) on the left-hand side by the unitary matrices $U$ and $\widetilde{U}$ given by
\begin{equation}
U=\frac{1}{\sqrt{2}}\left(\!\begin{array}{cc}
  \mathrm{i} & 1 \\
  1 & \mathrm{i}
\end{array}\!\right),\qquad \widetilde{U}=\left(\!\begin{array}{cc}
  \mathrm{i} & 0 \\
  0 & -\mathrm{i}
\end{array}\!\right),
\end{equation}
obtaining
\begin{equation}\label{eq_B}
\frac{\mathrm{d}}{\mathrm{d}x}
\left(\!
\begin{array}{c}
   u(x) \\
   v(x)
\end{array}\!\right)=
\left[\lambda
\left(\!\begin{array}{cc}
  \mathrm{i} \rho(x) & 0 \\
  0  & -\mathrm{i} \rho(x)
\end{array}\!\right)+
B(x)\right]
\left(\!
\begin{array}{c}
   u(x) \\
   v(x)
\end{array}\!\right),
\end{equation}
where $u=(\mathrm{i} f+g)/\sqrt{2}$ and $v=(f+\mathrm{i} g)/\sqrt{2}$, the diagonal matrix on the right-hand side corresponds to the matrix $A$, and the matrix $B$ contains the functions $p,q,t$, and fulfills the assumptions of Theorem \ref{Dirac_th}. At the same time, we have to adjust the boundary conditions to the functions $u$ and $v$. These become
\begin{equation}\label{bc_B}
\left(\!
\begin{array}{cc}
  \mathrm{e}^{\ii\alpha}& \ii \mathrm{e}^{-\ii\alpha} \\
  -\ii \mathrm{e}^{\ii\alpha}& \mathrm{e}^{-\ii\alpha}
\end{array}\!\right)
\left(\!
\begin{array}{c}
  u(a) \\
  v(a)
\end{array}\!\right)+
\left(\!
\begin{array}{cc}
  -\ii \mathrm{e}^{\ii\beta}& \mathrm{e}^{-\ii\beta} \\
  \mathrm{e}^{\ii\beta}& \ii \mathrm{e}^{-\ii\beta}
\end{array}\!\right)
\left(\!
\begin{array}{c}
  u(b) \\
  v(b)
\end{array}\!\right)
=0.
\end{equation}
Comparing equation (\ref{bc_B}) to (\ref{bv2}), we can see that the $2\times 2$ matrices on the left-hand side can be identified with $W_a$ and $W_b$, respectively. Let us then check if equation (\ref{eq_B}) with boundary conditions (\ref{bc_B}) satisfy the assumptions (i) and (ii) of Theorem \ref{Dirac_th}.

First, we will find the division of the complex plane of the
parameter $\lambda$ into sectors, as described in Theorem
\ref{Dirac_th}. The matrix $A(x)$ has two complex eigenvalues
$\vartheta_1(x)=\ii\rho(x)$ and $\vartheta_2(x)=-\ii\rho(x)$. Note
that since $\rho(x)$ is strictly greater than zero and continuous
for all $x\in[a,b]$, they satisfy the condition (i) of Theorem
\ref{Dirac_th}. Note, in particular, that
$\mathrm{Arg}[\vartheta_1(x)]=\mathrm{Arg}[-\vartheta_2(x)]$,
which leads to the division of the complex plane of the type shown
in Figure \ref{podzial}b, i.e. we have the following two sectors:
\begin{equation}
\sigma_1=\{\lambda: \Re_1< 0 \wedge \Re_2 > 0 \}=\{\lambda: \mathrm{Im}\,\lambda \geq 0 \}
\end{equation}
and
\begin{equation}
\sigma_2=\{\lambda: \Re_1>0 \wedge \Re_2<0 \}=\{\lambda:\mathrm{Im}\,\lambda <0 \}.
\end{equation}
This implies that the matrices defined in (\ref{deltas}) are
\begin{eqnarray}
\overline{\,\mathrm{I}}^{\,1}=\left(\!
                     \begin{array}{cc}
                       1 & 0 \\
                       0 & 0 \\
                     \end{array}\!
                   \right), &\qquad &
\underline{\,\mathrm{I}}^{\,1}=\left(\!
                     \begin{array}{cc}
                       0 & 0 \\
                       0 & 1 \\
                     \end{array}\!
                   \right),\\
\overline{\,\mathrm{I}}^{\,2}=\left(\!
                     \begin{array}{cc}
                       0 & 0 \\
                       0 & 1 \\
                     \end{array}\!
                   \right), &\qquad &
\underline{\,\mathrm{I}}^{\,2}=\left(\!
                     \begin{array}{cc}
                       1 & 0 \\
                       0 & 0 \\
                     \end{array}\!
                   \right).\nonumber
\end{eqnarray}
Now the condition (\ref{bvcond}) can be checked easily. Let us write out explicitly the matrices $\Omega_1$ and $\Omega_2$ for $W_a$ and $W_b$ as defined in (\ref{bc_B}), since we are going to use them in further calculations:
\begin{eqnarray}
\Omega_1=\left(\!
           \begin{array}{cc}
             \mathrm{e}^{\ii \alpha} & \mathrm{e}^{-\ii \beta} \\
             -\ii \mathrm{e}^{\ii\alpha} & \ii \mathrm{e}^{-\ii\beta} \\
           \end{array}\!
         \right),
&\qquad&
\Omega_2=\left(\!
           \begin{array}{cc}
             -\ii \mathrm{e}^{\ii \beta} & \ii \mathrm{e}^{-\ii \alpha} \\
             \mathrm{e}^{\ii\beta} & \mathrm{e}^{-\ii\alpha} \\
           \end{array}\!
         \right).
\end{eqnarray}
Clearly, they are invertible for all values of $\alpha, \beta\in\mathbbm{R}$, so the condition (ii) from Theorem \ref{Dirac_th} is fulfilled.

We have checked that all assumptions of Theorem \ref{Dirac_th} are satisfied,
therefore, the set $\{(u_n,v_n)^{T}\}$ of eigenfunctions of the problem
(\ref{eq_B})+(\ref{bc_B}) [and at the same time the set $\{(f_n,g_n)^{T}\}$ of
eigenfunctions of the problem (\ref{cor_eq})+(\ref{cor_bv})] is countably
infinite. Moreover, part 2 of the theorem applies to the expansion
(\ref{cor_exp}). To show explicitly that formulas
(\ref{cor_lim1})--(\ref{cor_lim3}) are valid, we premultiply equation
(\ref{cor_exp}) by $\sqrt{2}U$ obtaining the series of $(u_n,v_n)^{T}$
representing a complex two-component function $G=(G_1,G_2)^{T}=(F_2+\ii F_1,
F_1+\ii F_2)^{T}$:
\begin{equation}\label{cor_exp_G}
\bar{G}(x)=\sum_{n=-\infty}^{\infty} C_n
\left(\!\begin{array}{c}
    u_{n}(x) \\
    v_{n}(x)
  \end{array}\!
\right).
\end{equation}
Although part 2 of the theorem was formulated for real functions, the
generalization to complex functions is straightforward if the real and
imaginary parts are considered separately. Therefore, we will proceed with the
complex function $G$. Notice that coefficients $C_n$ can be written as
\begin{eqnarray}\label{cor_coeff_G}
C_n&=&\int_{a}^{b}\mathrm{d}x\, \rho(x) \big(f_{n}(x),g_n(x)\big)U^{\dagger}\widetilde{U}^{\dagger}\widetilde{U}U\left(\!\begin{array}{c}
    F_1(x) \\
    F_2(x)
  \end{array}\!\right)\nonumber\\
  &=&\int_{a}^{b}\mathrm{d}x\, \big(\hat{u}_{n}(x),\hat{v}_n(x)\big)A(x)\left(\!\begin{array}{c}
    G_1(x) \\
    G_2(x)
  \end{array}\!\right),
\end{eqnarray}
where $\big(\hat{u}_{n},\hat{v}_n\big)^T=(1/2)\widetilde{U}(u_n,v_n)^T$ is the
solution of the eigenproblem adjoint to (\ref{eq_B})+(\ref{bc_B}). The obtained
formula for coefficients is in agreement with equation (\ref{coeff_th}).
Consequently, the series (\ref{cor_exp}) modified with $U$ converges to
expressions (\ref{limitx}), (\ref{limita}), and (\ref{limitb}). Let us then
determine the matrices $H_a,J_a,H_b,J_b$ in this particular case. The angles
$\omega_1$ and $\omega_2$ are equal to $\pi$ since the sectors $\sigma_{1(2)}$
are half-planes (see Figure 1b), so exploiting equations (\ref{Ha})--(\ref{Jb})
we obtain

\vspace{5pt}
\begin{tabbing}
  aaaaaaaaaaaaaaaaaaa \= aaaaaaaaaaaaaaaaaaaaaaaaaaaaa \= aaaaaaaaaaaaaaaaaaaaaaa  \kill
  \> $H_a=\frac{1}{2}\left(
\begin{array}{cc}
 1 & -\ii \mathrm{e}^{-2 \ii \alpha} \\
 \ii \mathrm{e}^{2 \ii \alpha} & 1
\end{array}
\right),$ \> $H_b=\left(
\begin{array}{ll}
 0 & 0 \\
 0 & 0
\end{array}
\right),$ \\[2ex]
\> $J_a=\left(
\begin{array}{ll}
 0 & 0 \\
 0 & 0
\end{array}
\right),$ \> $J_b=\frac{1}{2}\left(
\begin{array}{cc}
 1 & - \ii \mathrm{e}^{-2 \ii \beta} \\
 \ii \mathrm{e}^{2 \ii \beta} & 1
\end{array}
\right).$
\end{tabbing}
\vspace{-64pt}
\begin{eqnarray}\label{matr_dir}
\,\nonumber\\
\,
\end{eqnarray}

\vspace{10pt}\noindent Taking into account the above results and equations
(\ref{limita}) and (\ref{limitb}), we obtain the following expressions for the
sum of series (\ref{cor_exp_G}) at the points $x=a$ and $x=b$:
\begin{equation}\label{sum_trans}
\bar{G}(a)=H_a G(a),\qquad \bar{G}(b)=J_b G(b).
\end{equation}
We recover the formulas for the original function $F$ premultiplying equations (\ref{sum_trans}) with $(1/\sqrt{2})U^{\dagger}$:
\begin{equation}
\bar{F}(a)=U^{\dagger}H_a U F(a),\qquad \bar{F}(b)=U^{\dagger}J_b U F(b).
\end{equation}
Finally, we insert the matrices (\ref{matr_dir}) into the above equations and obtain slightly modified formulas (\ref{cor_lim2}) and (\ref{cor_lim3}):
\begin{subequations}
\begin{eqnarray}
\bar{F}(a)&=&
\frac{1}{2}\left(\!
\begin{array}{cc}
 2\sin^2\alpha & -\sin 2\alpha \\
 -\sin 2\alpha & 2\cos ^2\alpha
\end{array}\!
\right)\left(\!
\begin{array}{c}
 F_1(a)\\
 F_2(a)
\end{array}\!
\right),\label{conva}\\
\bar{F}(b)&=&
\frac{1}{2}\left(\!
\begin{array}{cc}
 2\sin^2 \beta &-\sin 2\beta \\
 -\sin 2\beta & 2\cos^2 \beta
\end{array}\!
\right)\left(\!
\begin{array}{c}
 F_1(b)\\
 F_2(b)
\end{array}\!
\right)\label{convb}.
\end{eqnarray}
\end{subequations}
Summarizing, in the open interval $(a,b)$ the series (\ref{cor_exp}) converges to the function $F(x)$, whereas at the boundary points $x=a$ and $x=b$, the sum of the expansion is given by (\ref{conva}) and (\ref{convb}), respectively. This finishes the proof of Corollary \ref{Dirac_cor}.
\end{proof}

The Corollary reveals that the series (\ref{cor_exp}) converges to the function $F(x)$ in the whole interval $[a,b]$ if and only if the following two equalities hold simultaneously:
\begin{eqnarray}
\frac{1}{2}\left(\!
\begin{array}{cc}
 2\sin^2\alpha & -\sin 2\alpha \\
 -\sin 2\alpha & 2\cos ^2\alpha
\end{array}\!
\right)\left(\!
\begin{array}{c}
 F_1(a)\\
 F_2(a)
\end{array}\!
\right)=\left(\!
\begin{array}{c}
 F_1(a)\\
 F_2(a)
\end{array}\!
\right),\\
\frac{1}{2}\left(\!
\begin{array}{cc}
 2\sin^2 \beta &-\sin 2\beta \\
 -\sin 2\beta & 2\cos^2 \beta
\end{array}\!
\right)\left(\!
\begin{array}{c}
 F_1(b)\\
 F_2(b)
\end{array}\!
\right)=\left(\!
\begin{array}{c}
 F_1(b)\\
 F_2(b)
\end{array}\!
\right),
\end{eqnarray}
which is equivalent to the condition
\begin{equation}
F_1(a) \cos \alpha +F_2(a)\sin\alpha =0,\qquad F_1(b)\cos \beta +F_2(b)\sin\beta =0.
\end{equation}
One recognizes in this formulas the boundary conditions (\ref{cor_bv}).
Consequently, the sum of the series is continuous if and only if a function to
be expanded fulfills the same boundary conditions as the basis functions.

We demonstrated in this section that the properties of the developments into
eigensolutions of first-order differential systems (Theorem \ref{Dirac_th}) and
the expansions in the eigenfunctions of the Sturm--Liouville problem are
dramatically different. This fact, not realized by the originators of the
relativistic $R$-matrix method, has far-reaching consequences for the theory.
Although the assumptions in Corollary \ref{Dirac_cor} are restrictive, i.e. the
functions $p,q,t,\rho$, being elements of the matrices appearing in the
eigenproblem, are assumed to be continuous with their first derivative, its
conclusion still can be applied to expansion (\ref{exp_dir}) at the point
$r=\varrho$. This can be done, because the boundary conditions are separated
and, consequently, the convergence of the eigenfunction expansion at the point
$r=\varrho$ is independent of the behaviour at the point $r=0$. In fact, one
can consider the boundary-value problem (\ref{dir_eig})+(\ref{dir_eig_bc}) on
the interval $[\varepsilon,\varrho]$ with the boundary condition
$P_{i}^{(\gamma)}(\varepsilon)=0$ and show that the solution at the point
$r=\varrho$ remains the same for an arbitrarily small $\varepsilon$.
Taking this into account, we immediately see that the expansion (\ref{exp_dir})
for $r=\varrho$ does not, in general, converge to the solution of
(\ref{dir_rad}). It does only if the functions $P_{\gamma}(E,r)$ and
$Q_{\gamma}(E,r)$ satisfy the second of the boundary conditions
(\ref{dir_eig_bc}). This, however, cannot be assumed since this particular
condition does not have any physical meaning and is chosen in this way only to
obtain the expansion of the $R$-matrix. As a result, the commonly used
definition of the $R$-matrix contains an error, since in general it holds that
\begin{eqnarray}\label{rmatrix_neq}
\lim_{r\to\varrho^{-}}\sum_{i=-\infty}^{\infty} \frac{P_i^{(\gamma)}(\varrho)P_i^{(\gamma)}(r)}{E_i-E}
&\boldsymbol{\neq}&
\sum_{i=-\infty}^{\infty} \frac{P_i^{(\gamma)}(\varrho)P_i^{(\gamma)}(\varrho)}{E_i-E}.
\end{eqnarray}
The way to correct this mistake was found by Szmytkowski and Hinze for a
general multichannel case
\cite{SzmytkowskiHinze1995,SzmytkowskiHinze1996,Szmytkowski_rev,Szmytkowski1998}.
They introduced the correction which should by subtracted from the common and
faulty expression for the $R$-matrix in order to obtain the correct one.

\section{Conclusion}\label{conclusion}
Summarizing, we have provided theorems concerning the convergence of
eigenfunction expansions of a two-component function into eigenfunctions of a
Dirac operator on a finite closed interval augmented by separated boundary
conditions. In particular, we have shown that such expansions have
discontinuities at the boundary if the expanded function does not fulfill the
same boundary conditions as the basis functions. This confirms the result of
Szmytkowski \cite{JMathPhys} and has far-reaching consequences for the
relativistic $R$-matrix method. Moreover, the fact that the functional series
does not, in general, converge to a continuous function in the closed interval
may affect the rate of convergence and cause the Gibbs-like phenomenon
\cite{Barkhudaryan1,Barkhudaryan2} to occur.

The issue left as an open problem is the proof of convergence of
(\ref{cor_exp}) in the case when the functions $p(x),q(x),t(x)$ have a
singularity at one of the boundary points, e.g. at $x=a$. This is directly
related to convergence of expansion (\ref{exp_dir}) since the functions have
discontinuities for $r=0$. However, due to the separated character of the
boundary conditions (\ref{cor_bv}) and the fact that both the expanded function
and the basis functions vanish at $r=0$, the conclusions (\ref{cor_lim1}) and
(\ref{cor_lim3}) of Corollary \ref{Dirac_cor}, which concern the point
$r=\varrho$, should hold in this case, as well.


\begin{acknowledgements}
The author is indebted to Professor R. Szmytkowski for suggesting the problem,
valuable discussions, and commenting on the manuscript. I also thank R.
Augusiak for helpful discussions and comments. The research was supported by
the ``Universitat Aut\`{o}noma de Barcelona'' and Ministerio de Educaci\'{o}n
under FPU AP2008-03043.
\end{acknowledgements}

\newpage


\begin{thebibliography}{00}
\bibitem{WignerEisenbud} E. P. Wigner and L. Eisenbud, {\it Higher angular momenta and long range interaction in resonance reactions}, Phys. Rev. {\bf 72}, 29 (1947).
\bibitem{SturmLiouville_applications} M. A. Al-Gwaiz, {\it Sturm--Liouville Theory and its Applications} (Springer, London, 2008).
\bibitem{Titchmarsh} E. C. Titchmarsh, {\it Eigenfunction Expansions Associated with Second-Order Differential Equations}, Part 1, 2nd ed. (Clarendon, Oxford, 1962).
\bibitem{LevitanSargsjan} B. M. Levitan and I. S. Sargsjan, {\it Introduction to Spectral Theory: Selfadjoint Ordinary Differential Operators} (American Mathematical Society, Providence, Rhode Island, 1975).
\bibitem{CoddingtonLevinson} E. A. Coddington and N. Levinson, {\it Theory of Ordinary Differential Equations} (McGraw-Hill, New York, 1955).
\bibitem{Atkinson} F. V. Atkinson, {\it Discrete and Continuous Boundary Value Problems} (Academic, New York, 1964).
\bibitem{Minkin} A. Minkin, {\it Equiconvergence theorems for differential
operators}, J. Math. Sci. {\bf 96}, 3631 (1999).
\bibitem{Mityagin_Schrodinger}  P. Djakov and B. Mityagin, {\it Spectral gap asymptotics of one-dimensional Schr\"{o}dinger operators with singular periodic potentials}, Integral Transform. Spec. Funct. {\bf 20}, 265 (2007).
\bibitem{Hurwitz} W. A. Hurwitz, {\it An expansion theorem for a system of linear differential equations of the first order}, Trans. Am. Math. Soc. {\bf 22}, 526 (1921).
\bibitem{Camp} C. C. Camp, {\it An extension of the Sturm--Liouville expansion}, Am. J. Math. {\bf 43}, 25 (1922).
\bibitem{Schur} A. Schur, {\it Zur Entwicklung willk\"urlicher Funktionen nach L\"osungen von Systemen linearer Differentialgleichungen}, Math. Ann. {\bf 82}, 213 (1921).
\bibitem{BirkhoffLanger} G. D. Birkhoff and R. E. Langer, {\it The boundary problems and developments associated with a system of ordinary differential equations of the first order}, Proc. Am. Acad. Arts Sci. {\bf 58}, 51 (1923).
\bibitem{Bliss} G. A. Bliss, {\it A boundary value problem for a system of linear differential equations of the first order}, Trans. Am. Math. Soc. {\bf 28}, 561 (1926).
\bibitem{Titchmarsh_art} E. C. Titchmarsh, {\it An extension of the Sturm--Liouville expansion}, Quart. J. Math. {\bf 13}, 1 (1942).
\bibitem{Mityagin_Dirac} P. Djakov and B. Mityagin, {\it Unconditional
convergence of spectral decompositions of 1D Dirac operators with regular boundary conditions}, arXiv:1008.4095v1
[math.SP];  P. Djakov and B. Mityagin, {\it Equiconvergence of spectral
decompositions of 1D Dirac operators with regular boundary conditions},
arXiv:1108.0344v1 [math.SP].
\bibitem{ThomasLane} A. M. Lane and R. G. Thomas, {\it $R$-matrix theory of nuclear reactions}, Rev. Mod. Phys. {\bf 30}, 252 (1958).
\bibitem{Barrett} R. F. Barrett, B. A. Robson, and W. Tobocman, {\it Calculable methods for many-body scattering}, Rev. Mod. Phys. {\bf 55}, 155 (1983); Erratum {\bf 56}, 567 (1984).
\bibitem{Szmytkowski_rev} R. Szmytkowski, {\it $R$-matrix Method for the Schr\"{o}dinger and Dirac Equations} (Wydawnictwo Politechniki Gda\'{n}skiej, Gda\'{n}sk, 1999) (in Polish).
\bibitem{Goertzel} G. Goertzel, {\it Resonance reactions involving Dirac-type incident particles}, Phys. Rev. {\bf 73}, 1463 (1948).
\bibitem{Chang} J. Chang, {\it The $R$-matrix theory of electron--atom scattering using the Dirac Hamiltonian}, J. Phys. B {\bf 8}, 2327 (1975).
\bibitem{Rosenthal} A. S. Rosenthal, {\it Non-existence of an $R$-matrix theory for the Dirac equation}, J. Phys. G {\bf 13}, 491 (1987).
\bibitem{SzmytkowskiHinze1995} R. Szmytkowski and J. Hinze, {\it Convergence of the non-relativistic and relativistic $R$-matrix expansions at the reaction volume boundary}, J. Phys. B {\bf 29}, 761 (1996); Erratum {\bf 29}, 3800 (1996).
\bibitem{SzmytkowskiHinze1996} R. Szmytkowski and J. Hinze, {\it Kapur--Peierls and Wigner $R$-matrix theories for the Dirac equation}, J. Phys. A {\bf 29}, 6125 (1996).
\bibitem{Szmytkowski1998} R. Szmytkowski, {\it Operator formulation of Wigner's $R$-matrix theories for the Schr\"{o}dinger and Dirac equations}, J. Math. Phys. {\bf 39}, 5231
(1998); Erratum {\bf 40}, 4181 (1999).
\bibitem{Grant_last} I. P. Grant, {\it The Dirac operator on a finite domain and the $R$-matrix method}, J. Phys. B {\bf 41}, 055002 (2008).
\bibitem{Grant_book} I. P. Grant, {\it Relativistic Quantum Theory of Atoms and Molecules. Theory and Computation} (Springer, New York, 2007).
\bibitem{Burke_book} P. G. Burke, {\it R-Matrix Theory of Atomic Collisions: Application to Atomic, Molecular and Optical
Processes}, (Springer, Berlin, 2011).
\bibitem{JMathPhys} R. Szmytkowski, {\it Discontinuities in Dirac eigenfunction expansions}, J. Math. Phys. {\bf 42}, 4606 (2001).
\bibitem{CondonShortley} E. U. Condon and G. H. Shortley, {\it The Theory of Atomic Spectra} (Cambridge University Press, London, 1935)
\bibitem{Szmytkowski_spinors} R. Szmytkowski, {\it Recurrence and differential relations for spherical spinors}, J. Math. Chem. {\bf 42}, 397 (2007).
\bibitem{Barkhudaryan1} R. Barkhudaryan, {\it  Convergence acceleration of eigenfunction expansions of the one-dimensional Dirac system}, AIP Conf. Proc. {\bf 936}, 86 (2007).
\bibitem{Barkhudaryan2} R. Barkhudaryan, {\it Convergence acceleration of expansions in eigenfunctions of some boundary value problems}, PhD Thesis (Institute of Mathematics of the National Academy of Sciences of Armenia, Yerevan, 2007) (in Russian).
\bibitem{footnote} The boundary value problem adjoint to (\ref{bv1})+(\ref{bv2}) has the form $$\frac{\mathrm{d}}{\mathrm{d}x}\left(
  \begin{array}{c}
    \hat{u}(x) \\
    \hat{v}(x)
  \end{array}
\right)=-\Big[\lambda A(x)+B(x)\Big]^{T}\left(
  \begin{array}{c}
    \hat{u}(x) \\
    \hat{v}(x)
  \end{array}
\right),\qquad a\leq x\leq b,$$
$$\widehat{W}_a
\left(
    \begin{array}{c}
    \hat{u}(a) \\
    \hat{v}(a)
  \end{array}
\right)+
\widehat{W}_b
\left(
    \begin{array}{c}
    \hat{u}(b) \\
    \hat{v}(b)
  \end{array}
\right)=0,$$
where the matrices appearing in the boundary conditions must satisfy $$W_a\widehat{W}_a^{T}=W_b\widehat{W}_b^{T}.$$

\end{thebibliography}
\end{document}